\begin{document}

\title{A note on triangle-free graphs}


\author{V.V. Mkrtchyan \and
        P.A. Petrosyan
}


\institute{V.V. Mkrtchyan \at
                Department of Informatics and Applied Mathematics,\\
Yerevan State University, 0025, Armenia\\
              \email{vahanmkrtchyan2002@\{ysu.am, ipia.sci.am,
               yahoo.com\}}\\
             \and
              P.A. Petrosyan \at
              Department of Informatics and Applied Mathematics,\\
Yerevan State University, 0025, Armenia\\
              \email{pet\_petros@ipia.sci.am}\\
}

\date{Received: date / Accepted: date}

\maketitle

\begin{abstract}
We show that if $G$ is a simple triangle-free graph with $n\geq 3$
vertices, without a perfect matching, and having a minimum degree at
least $\frac{n-1}{2}$, then $G$ is isomorphic either to $C_5$ or to $K_{\frac{n-1}{2},\frac{n+1}{2}}$.\\

\keywords{Triangle-free graph \and Cycle \and Complete bipartite
graph}

\subclass{MSC 05C75}
\end{abstract}

\section{Introduction}
\label{intro}

We consider finite undirected graphs that do not contain loops or
multiple edges. Let $V(G)$ and $E(G)$ denote the sets of vertices
and edges of $G$, respectively. The degree of a vertex $v\in V(G)$
is denoted by $d_{G}(v)$ and the diameter of $G$ by $diam(G)$. For a
graph $G$, let $\delta(G)$ and $\Delta(G)$ denote the minimum and
maximum degree of $G$, respectively. For $n\geq 3$, let $C_{n}$
denote the cycle of length $n$. The cycle $C_3$ is called a
triangle. For $m,n\geq 1$, let $K_{m,n}$ denote the complete
bipartite graph one part of which has $m$ vertices and the other
part $n$ vertices. Terms and concepts that we do not define can be
found in \cite{b6}.

It is well-known that triangle-free graphs play an important role in
graph theory. One of the first results concerns triangle-free graphs
is the following theorem of Mantel \cite{b4}.

\begin{theorem}
\label{mytheorem1} If $G$ is a simple triangle-free graph with $n$
vertices and $m$ edges, then
\begin{center}
$m\leq \left\lfloor \frac{n^{2}}{4}\right\rfloor$.
\end{center}
\end{theorem}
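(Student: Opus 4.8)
The plan is to establish Mantel's bound by a double-counting identity fed into the Cauchy--Schwarz inequality; the only place triangle-freeness enters is the observation that the two endpoints of an edge cannot have a common neighbour.

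First I would record the local estimate. Fix any edge $uv\in E(G)$. Since $G$ is triangle-free, no vertex of $G$ is adjacent to both $u$ and $v$, so $N_G(u)\cap N_G(v)=\emptyset$; as both neighbourhoods are subsets of $V(G)$, this gives
\[
d_G(u)+d_G(v)=|N_G(u)\cup N_G(v)|\le n .
\]

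Next I would sum this inequality over all edges. A vertex $v$ lies in exactly $d_G(v)$ edges and contributes $d_G(v)$ to the left-hand side for each of them, so $\sum_{uv\in E(G)}\bigl(d_G(u)+d_G(v)\bigr)=\sum_{v\in V(G)}d_G(v)^2$. Combining this with the local estimate gives $\sum_{v\in V(G)}d_G(v)^2\le mn$. On the other hand, Cauchy--Schwarz (equivalently, the quadratic mean dominates the arithmetic mean) together with the handshake identity $\sum_{v\in V(G)}d_G(v)=2m$ yields $\sum_{v\in V(G)}d_G(v)^2\ge\frac1n\Bigl(\sum_{v\in V(G)}d_G(v)\Bigr)^2=\frac{4m^2}{n}$. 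Chaining the two inequalities gives $\frac{4m^2}{n}\le mn$, that is, $m\le n^2/4$. Since $m$ is an integer, this is equivalent to $m\le\lfloor n^2/4\rfloor$, which is the claim. (If $E(G)=\emptyset$ the bound is trivial, and the displayed identities are vacuously valid in that case, so nothing special is needed.)

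I do not expect a genuine obstacle. The only point needing a moment's care is the final step: when $n$ is even, $n^2/4$ is already an integer; when $n$ is odd, $\lfloor n^2/4\rfloor=(n^2-1)/4$, and the integrality of $m$ together with $m\le n^2/4$ still forces $m\le(n^2-1)/4$. As a fallback I would keep in mind the classical inductive proof: pick an edge $uv$, delete both endpoints, note that the resulting triangle-free graph on $n-2$ vertices has lost at most $d_G(u)+d_G(v)-1\le n-1$ edges, and check the identity $\lfloor(n-2)^2/4\rfloor+(n-1)=\lfloor n^2/4\rfloor$, with the trivial cases $n=1,2$ serving as the base of the induction.
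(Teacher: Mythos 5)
Your proposal is a correct and complete proof. Note that the paper itself does not prove this statement: it is Mantel's classical theorem, quoted as background with a citation to Mantel (1907), so there is no proof in the paper to compare against. Your double-counting argument --- $d_G(u)+d_G(v)\le n$ for every edge $uv$ by triangle-freeness, summing to $\sum_v d_G(v)^2\le mn$, and Cauchy--Schwarz giving $\sum_v d_G(v)^2\ge 4m^2/n$ --- is one of the standard proofs, and you handle the two minor points that need care (the trivial case $m=0$ before dividing by $m$, and the integrality step turning $m\le n^2/4$ into $m\le\lfloor n^2/4\rfloor$). The inductive fallback you sketch is also sound.
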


Note that the upper bound in Theorem \ref{mytheorem1} is sharp for
the complete bipartite graph $K_{\lfloor
\frac{n}{2}\rfloor,\lceil\frac{n}{2}\rceil}$. Moreover, $K_{\lfloor
\frac{n}{2}\rfloor,\lceil\frac{n}{2}\rceil}$ is a unique graph with
$\lfloor \frac{n^{2}}{4}\rfloor$ edges. Clearly, every bipartite
graph is a triangle-free graph. On the other hand, in 1974,
Andr\'asfai, Erd\H{o}s and S\'os \cite{b1} found the minimum degree
condition which forces a triangle-free graph to be bipartite.

\begin{theorem}
\label{mytheorem2} If $G$ is a simple triangle-free graph with $n$
vertices and $\delta(G)> \frac{2}{5}n$, then $G$ is bipartite.
\end{theorem}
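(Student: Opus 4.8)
The plan is to argue by contradiction, so suppose $G$ is triangle-free, is \emph{not} bipartite, and satisfies $\delta(G) > \frac{2}{5}n$. Since $G$ is not bipartite it contains an odd cycle; I would fix a \emph{shortest} odd cycle $C = v_1 v_2 \cdots v_m v_1$ and record that $m \geq 5$, as $G$ has no triangle. Everything then reduces to two structural facts about $C$.

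The first fact is that $C$ is chordless. A chord $v_i v_j$ would cut $C$ into two arcs whose lengths sum to the odd number $m$, so one arc has even length; that arc together with the chord is an odd cycle of length strictly less than $m$, contradicting the choice of $C$. The second, and more delicate, fact is that every vertex $w \notin V(C)$ has at most two neighbours on $C$, and that if $w$ has exactly two neighbours $v_i, v_j$ on $C$, then $v_i$ and $v_j$ lie at distance $2$ along $C$. Here I would again use a length/parity bookkeeping: two neighbours of $w$ at $C$-distance $d$ yield, via the two arcs of $C$ between them, cycles of lengths $d+2$ and $m-d+2$, exactly one of which is odd; minimality of $m$ forces $d=2$, and then $m \geq 5$ (together with triangle-freeness) rules out a third neighbour.

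Granting these two facts, the theorem follows from a double count of the edges between $V(C)$ and $V(G) \setminus V(C)$. Chordlessness says each $v_i$ has exactly two neighbours inside $C$, hence at least $\delta(G) - 2$ outside it, producing at least $m(\delta(G)-2)$ crossing edges; the second fact says each vertex outside $C$ carries at most two crossing edges, giving at most $2(n-m)$ of them. Hence $m(\delta(G)-2) \leq 2(n-m)$, that is $m\,\delta(G) \leq 2n$, so $\delta(G) \leq \frac{2n}{m} \leq \frac{2n}{5}$ --- contradicting the hypothesis. (The degenerate case $V(G) = V(C)$, in which $G = C_m$ and $\delta(G) = 2$, is dismissed at once, since then $2 = \delta(G) > \frac{2}{5}m \geq 2$.)

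I expect the only real work to be in pinning down the second structural fact cleanly; once it is in place, the rest is a one-line inequality. It is also worth noting that the constant $\frac{2}{5}$ is not accidental: it is exactly $\frac{2n}{m}$ with $m = 5$, the smallest possible odd girth of a triangle-free non-bipartite graph, and the balanced blow-up of $C_5$ shows the bound cannot be weakened.
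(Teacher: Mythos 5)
Your argument is correct, but note that the paper does not actually prove this statement: Theorem~\ref{mytheorem2} is quoted from Andr\'asfai, Erd\H{o}s and S\'os \cite{b1} with no proof given, so there is no in-paper argument to compare against. What you have written is, in essence, the classical proof of that theorem: take a shortest odd cycle $C$ of length $m\geq 5$, observe it is chordless, show every outside vertex sends at most two edges to $C$ (and only to vertices at $C$-distance $2$), and double-count the edges between $V(C)$ and the rest to get $m\,\delta(G)\leq 2n$, hence $\delta(G)\leq \frac{2n}{5}$, a contradiction. All the steps check out: the parity bookkeeping in the chord argument and in the ``$d=2$'' lemma is right (if $d$ is odd then $d+2\geq m$ forces $m\leq 3$; if $d$ is even then $m-d+2\geq m$ forces $d=2$), the degenerate case $V(G)=V(C)$ is handled, and the counting inequality is exact. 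One small wording quibble: the exclusion of a third neighbour on $C$ is really a parity fact --- three vertices of an odd cycle cannot be pairwise at $C$-distance $2$, since that would force $m=6$ --- rather than a consequence of triangle-freeness per se (triangle-freeness enters only through $m\geq 5$). Your closing remark on sharpness is also accurate: the balanced blow-up of $C_5$ has minimum degree exactly $\frac{2n}{5}$ and is not bipartite, so the strict inequality in the hypothesis cannot be relaxed.
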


Also, a similar result for triangle-free graphs was obtained by
Erd\H{o}s, Fajtlowits and Staton \cite{b2} in 1991.

\begin{theorem}
\label{mytheorem3} If $G$ is a simple triangle-free graph with no
three vertices having equal degree, then $G$ is bipartite.
\end{theorem}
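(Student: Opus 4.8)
The plan is to prove the contrapositive: if $G$ is triangle-free and \emph{not} bipartite, then some degree value is shared by at least three vertices. First I would pass to a vertex-minimal counterexample $G$ (triangle-free, non-bipartite, and with every degree used at most twice) and record two easy normalizations. Deleting an isolated vertex changes no other vertex's degree and destroys no odd cycle, so by minimality $\delta(G)\ge 1$. Moreover, since the $n$ degrees lie in $\{0,1,\dots,\Delta\}$ and each value is used at most twice, pigeonhole gives $n\le 2(\Delta+1)$, so the maximum degree is large, namely $\Delta\ge \frac{n-2}{2}$.

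Next I would exploit a vertex $v$ of maximum degree $\Delta$. Because $G$ is triangle-free, $A:=N(v)$ is an independent set of size $\Delta$; put $B:=V(G)\setminus(\{v\}\cup A)$, so $|B|=n-1-\Delta$. The key observation is that if $B$ were independent, then $(A,\{v\}\cup B)$ would be a bipartition of $G$ (there are no edges inside $A$, and $v$ has no neighbour in $B$), contradicting non-bipartiteness; hence $B$ must contain an edge $xy$. At the same time, every $u\in A$ has all of its neighbours in $\{v\}\cup B$, so $1\le d(u)\le n-\Delta$; since the $\Delta$ vertices of $A$ realize degrees lying in a range of only $n-\Delta$ values, each at most twice, we obtain the complementary bound $\Delta\le 2(n-\Delta)$.

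The heart of the argument is to convert the edge $xy\subseteq B$ into a forbidden third repetition of some degree. For the edge $xy$ the triangle-free inequality $d(x)+d(y)\le n$ holds (adjacent vertices have disjoint neighbourhoods), and $x,y$ each send their edges only into $A\cup B$. I would combine this with the near-tight packing forced by $\Delta\ge\frac{n-2}{2}$ and $\Delta\le 2(n-\Delta)$ to show that the degrees occurring in $A$ together with those of the endpoints $x,y$ and of $v$ cannot all avoid a triple: the available degree ``slots'', each usable at most twice, are too few to accommodate $A$, the pair $x,y$, and $v$ simultaneously without a collision of multiplicity three.

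I expect the decisive difficulty to be precisely this bookkeeping, and in particular the low-degree regime. The naive inductive move of deleting a vertex of small degree is \emph{not} safe here: removing a pendant vertex drops its neighbour's degree by one and can create a third vertex of that new degree, so the hypothesis is not inherited. For the same reason one cannot simply reduce to Theorem~\ref{mytheorem2} by proving $\delta(G)>\frac{2}{5}n$, since the hypothesis is compatible with vertices of very small (even zero) degree and so imposes no direct lower bound on $\delta$. The real content therefore lies in a careful, possibly case-based, count that tracks exactly how many vertices can carry each degree across the three zones $\{v\}$, $A$, and $B$, and shows that an edge inside $B$ forces the third coincidence.
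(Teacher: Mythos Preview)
First, a point of comparison: the paper does not prove this theorem. Theorem~\ref{mytheorem3} is quoted in the introduction as a 1991 result of Erd\H{o}s, Fajtlowicz and Staton, purely as background; the paper's own contribution is Theorem~\ref{mytheorem5}, and no argument for Theorem~\ref{mytheorem3} appears anywhere in the text. So there is no ``paper's proof'' of this statement to weigh your proposal against.

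As to the proposal itself, it is not yet a proof, and you say so. The setup is sound: in a minimal counterexample you correctly obtain $\delta(G)\ge 1$ (indeed then $n\le 2\Delta$, slightly sharper than your stated $n\le 2(\Delta+1)$); the neighbourhood $A$ of a maximum-degree vertex is independent, and if the complement $B$ were also independent the graph would be bipartite; and the degrees in $A$ are confined to $\{1,\dots,n-\Delta\}$, forcing $\Delta\le 2(n-\Delta)$. But you explicitly stop before the step that carries all the weight. You write that ``the real content therefore lies in a careful, possibly case-based, count'' and merely \emph{assert} that the constraints from $v$, the edge $xy\subseteq B$, and the degrees inside $A$ cannot coexist without a triple repetition. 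No such count is supplied, and it is not clear how to extract one from the inequalities you have recorded: the window $n/2\le\Delta\le 2n/3$ leaves genuine slack, the degrees realized in $B\cup\{v\}$ can overlap those realized in $A$, and the single relation $d(x)+d(y)\le n$ does not by itself pin down where $d(x)$ and $d(y)$ fall relative to the values already used twice in $A$. Until that bookkeeping is actually carried out, or replaced by a different mechanism, the argument has a gap at exactly the point you yourself identify as decisive.
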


In 1989, Erd\H{o}s, Pach, Pollack and Tuza \cite{b3} investigated
the connection between the diameter and minimum degree of connected
triangle-free graphs. In particular, they proved the following

\begin{theorem}
\label{mytheorem4} If $G$ is a connected triangle-free simple graph
with $n\geq 3$ vertices, and $\delta(G)\geq 2$, then
\begin{center}
$diam(G)\leq \left\lceil
\frac{n-\delta(G)-1}{2\delta(G)}\right\rceil$.
\end{center}
\end{theorem}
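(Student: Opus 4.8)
The plan is to combine a shortest path with the two simplest structural features of a triangle-free graph. Put $d=diam(G)$, fix vertices $u,v$ with $d(u,v)=d$, and let $P=x_{0}x_{1}\cdots x_{d}$ (with $x_{0}=u$ and $x_{d}=v$) be a shortest $u$--$v$ path. Two facts will be used throughout. Since $P$ is a geodesic, whenever $|i-j|\ge 3$ the closed neighbourhoods $N[x_{i}]$ and $N[x_{j}]$ are disjoint: a common vertex would give a walk of length at most $2$ between $x_{i}$ and $x_{j}$. Since $G$ is triangle-free, two adjacent vertices have disjoint neighbourhoods (a common neighbour together with the joining edge is a triangle), hence $|N(x_{i})\cup N(x_{i+1})|\ge 2\delta(G)$ for every $i$. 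It is convenient to layer $V(G)$ by distance from $x_{0}$, writing $L_{t}=\{w: d(x_{0},w)=t\}$; then $x_{t}\in L_{t}$, $L_{d+1}=\emptyset$, and $N(x_{t})\subseteq L_{t-1}\cup L_{t}\cup L_{t+1}$, so the previous inequality yields $|L_{t-2}|+|L_{t-1}|+|L_{t}|+|L_{t+1}|\ge 2\delta(G)$.

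The main step is a packing argument along $P$. I would split the index range $0,1,\ldots,d$ into consecutive blocks and, from each block, extract a set of at least $2\delta(G)$ vertices --- a union $N(x_{i})\cup N(x_{i+1})$, equivalently a union of a few consecutive layers --- choosing the blocks spaced far enough apart that, by the geodesic property, the extracted sets are pairwise disjoint. Their number is a fixed fraction of $d$, so summing their cardinalities, adding the trivial contribution of the leftover vertices --- in particular $x_{0}$, $x_{d}$ and the layers closest to them, which is where the additive $-\delta(G)-1$ originates --- and comparing with $n$ gives an inequality of the form $n\ge 2\delta(G)\,b+c$, where $b$ is the number of blocks and $c\ge 0$. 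Solving for $d$ and rounding then produces a ceiling bound of the claimed shape. One may assume $d\ge 2$ from the start, since $d\le 1$ forces $G$ to be complete, which is impossible for a triangle-free graph on $n\ge 3$ vertices.

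I expect the real obstacle to be the final calibration: the length and offset of the blocks, together with the exact accounting of the vertices near the two ends of $P$, must be arranged so that the estimate collapses precisely to $\left\lceil\,(n-\delta(G)-1)/(2\delta(G))\,\right\rceil$; pinning down these exact constants, rather than a bound merely of the same order of magnitude, is where essentially all the difficulty lies. The hypothesis $\delta(G)\ge 2$ enters here as well, both to keep the relevant neighbourhoods nonempty and to make the rounding valid in the boundary regime where $d$ is small or $n$ is close to $2\delta(G)+1$; those small configurations are best checked directly, and it is worth verifying along the way that odd cycles and complete bipartite graphs with nearly equal parts are the graphs attaining the bound, since they pin down the optimal constants.
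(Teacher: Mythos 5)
This theorem is one the note only cites (from Erd\H{o}s--Pach--Pollack--Tuza) and does not prove, so there is no in-paper argument to compare yours with; but the more serious problem is that the inequality as printed here is false, so no amount of the ``final calibration'' you defer can succeed. Test it on $C_5$: $n=5$, $\delta=2$, so $\left\lceil\frac{n-\delta-1}{2\delta}\right\rceil=\left\lceil\frac{2}{4}\right\rceil=1$, while $diam(C_5)=2$. Likewise $K_{\delta,\delta}$ has diameter $2$ against a bound of $1$, and a long cycle $C_n$ has diameter $\lfloor n/2\rfloor$ against a bound of roughly $n/4$. The original theorem carries an extra multiplicative factor: for triangle-free graphs the bound is $diam(G)\le 4\left\lceil\frac{n-\delta-1}{2\delta}\right\rceil$ (and a constant of this order is unavoidable, as blown-up paths with bags of size $\delta/2$ show, having diameter about $2n/\delta$). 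Ironically, your closing remark --- that odd cycles and balanced complete bipartite graphs should be checked as extremal cases --- is precisely the check that would have exposed the discrepancy: they violate the printed bound rather than attain it.

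Concerning the argument itself: the packing scheme is the right (and standard) one, but look at what it actually delivers. To make the sets $N(x_i)\cup N(x_{i+1})$ from different blocks disjoint you need the chosen indices spaced at least $4$ apart along the geodesic, so the number of blocks $b$ is about $d/4$, and the count $n\ge 2\delta b+c$ gives $d\lesssim 4\cdot\frac{n-\delta-1}{2\delta}$ --- exactly the corrected bound with the factor $4$, not the one stated. So the step you describe as ``where essentially all the difficulty lies'' is not a technicality to be tuned later: aimed at the printed inequality it cannot be completed, whereas carried out honestly it proves the factor-$4$ statement. As submitted, then, the proposal is both incomplete (the decisive counting is left open by your own account) and aimed at a target ruled out by the counterexamples above; redirected at the correct statement, your outline is essentially the known proof.
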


In this short note we prove that if $G$ is a simple triangle-free
graph with $n\geq 3$ vertices, without a perfect matching, and
having a minimum degree at least $\frac{n-1}{2}$, then $G$ is
isomorphic either to $C_5$ or to $K_{\frac{n-1}{2},\frac{n+1}{2}}$.

\bigskip

\section{The main result}\

\begin{theorem}
\label{mytheorem5} Let $G$ be a graph on $n\geq 3$ vertices
satisfying the conditions:
\begin{enumerate}
    \item [(a)] $\delta(G)\geq \frac{n-1}{2}$;
    \item [(b)] $G$ has no perfect matching;
    \item [(c)] $G$ is triangle-free.
\end{enumerate}
Then $G$ is either $C_5$ or $K_{\frac{n-1}{2},\frac{n+1}{2}}$.
\end{theorem}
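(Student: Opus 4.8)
The plan is to use the Andr\'asfai--Erd\H{o}s--S\'os theorem (Theorem~\ref{mytheorem2}) as the main tool, supplemented by a short degree count and a by-hand treatment of the two smallest admissible values of $n$.

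First I would show that $n$ must be odd. Suppose $n$ is even. Since vertex degrees are integers, (a) then upgrades to $\delta(G)\ge n/2$, and because $n/2>\frac{2}{5}n$, Theorem~\ref{mytheorem2} shows that $G$ is bipartite, say with parts $A$ and $B$. Each $a\in A$ has all its neighbours in $B$, so $n/2\le\delta(G)\le d_G(a)\le|B|$, and symmetrically $|A|\ge n/2$; hence $|A|=|B|=n/2$ and every vertex is joined to all vertices on the opposite side, i.e. $G=K_{n/2,n/2}$. This graph has a perfect matching, contradicting (b), so $n=2k+1$ is odd and $\delta(G)\ge k=\frac{n-1}{2}$. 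The same "bipartite-plus-counting" argument then handles every odd $n\ge 7$: now $k=\frac{n-1}{2}>\frac{2}{5}n$ (equivalently $n>5$), so $G$ is bipartite with parts $A,B$; as $|A|+|B|=n$ is odd we may assume $|A|<|B|$, hence $|A|\le k$. Every $v\in B$ then satisfies $k\le\delta(G)\le d_G(v)\le|A|\le k$, forcing $|A|=k$, $|B|=k+1$, and $v$ adjacent to all of $A$; since $G$ is bipartite with these part sizes and contains all edges across the parts, $G=K_{\frac{n-1}{2},\frac{n+1}{2}}$.

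It remains to check $n=3$ and $n=5$ by inspection. For $n=3$, $\delta(G)\ge1$ together with triangle-freeness forces $G=P_3=K_{1,2}$. For $n=5$, $\delta(G)\ge2$ gives $|E(G)|\ge5$ while Theorem~\ref{mytheorem1} gives $|E(G)|\le6$; a short analysis of the possible degree sequences shows that $|E(G)|=5$ forces $G$ to be $2$-regular, hence $G=C_5$, and that $|E(G)|=6$ forces the degree sequence $(3,3,2,2,2)$, which in a triangle-free graph on $5$ vertices is realized only by $K_{2,3}$ (a vertex of degree $4$ would create a triangle). I expect $n=5$ to be the only delicate point: there $\frac{n-1}{2}$ equals $\frac{2}{5}n$ exactly, so Theorem~\ref{mytheorem2} is not available---and indeed $C_5$ itself is not bipartite---so the exceptional graph $C_5$ has to be located by the explicit small-case check rather than by the general argument. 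Note finally that hypothesis (b) is used only to rule out the even case via $K_{n/2,n/2}$; once $n$ is odd it is automatic.
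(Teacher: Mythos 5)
Your proof is correct, but it follows a genuinely different route from the paper's. The paper first applies Ore's theorem to get a Hamiltonian path, so that (b) forces $n$ to be odd, and then works directly with the maximum degree: it shows $\Delta(G)\le\frac{n+1}{2}$, that $\Delta(G)=\frac{n+1}{2}$ forces $G\cong K_{\frac{n-1}{2},\frac{n+1}{2}}$, and that $\Delta(G)=\frac{n-1}{2}$ makes $G$ regular of degree $r=\frac{n-1}{2}$, after which a neighbourhood argument around a single edge rules out $r\ge 4$, leaving $C_5$. You instead lean on the Andr\'asfai--Erd\H{o}s--S\'os theorem (Theorem~\ref{mytheorem2}): for even $n$, integrality gives $\delta(G)\ge n/2>\frac{2}{5}n$, so $G$ would be $K_{n/2,n/2}$ and have a perfect matching, whence $n$ is odd; for odd $n\ge 7$ the same theorem plus your degree count pins down $K_{\frac{n-1}{2},\frac{n+1}{2}}$; and $n=3,5$ are settled by inspection, which is exactly where $C_5$ appears (at the threshold $\frac{n-1}{2}=\frac{2}{5}n$ where Theorem~\ref{mytheorem2} gives nothing). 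Your version is shorter and makes transparent why $C_5$ is the unique non-bipartite exception, at the price of resting on the comparatively deep Andr\'asfai--Erd\H{o}s--S\'os theorem; the paper's proof is elementary and self-contained apart from Ore's theorem, and it uses hypothesis (b) structurally (via the Hamiltonian path) rather than only to exclude even $n$, as you correctly observe. The only points worth writing out in full are routine: both colour classes of the bipartition are nonempty (otherwise $G$ has no edges), and in the $n=5$, six-edge case the degree sequence $(3,3,2,2,2)$ forces the two degree-$3$ vertices to be nonadjacent (if adjacent, each would have two further neighbours among the remaining three vertices, so by pigeonhole they would share one, creating a triangle), which then yields $K_{2,3}$.
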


\begin{proof} First of all, note that by (a) and the well-known result due to Ore \cite{b5},
we have that $G$ has a hamiltonian path. Now, since by (b) $G$ has
no perfect matching, we deduce that $n$ is odd.

Let us show that $\Delta(G)\leq \frac{n+1}{2}$. Suppose that
$\Delta(G)\geq \frac{n+3}{2}$, and let $u$ be a vertex of maximum
degree in $G$. Let $v_{1},\ldots,v_{k}$ be the neighbours of $u$.
Note that $k\geq \frac{n+3}{2}$. Now, let $u_{1},\ldots,u_{l}$ be
the remaining vertices of $G$. Note that

\begin{center}
$l=n-k-1\leq n-\frac{n+3}{2}-1=\frac{n-5}{2}$.
\end{center}

Now, (a) implies that $d_{G}(v_{1})\geq \frac{n-1}{2}$. Since $l\leq
\frac{n-5}{2}$, there should be an edge $v_{1}v_{i}\in E(G)$, where
$2\leq i \leq k$. This contradicts (c), since $u, v_{1}, v_{i}$ form
a triangle of $G$. Thus, $\Delta(G)\leq \frac{n+1}{2}$.

Next, let us show that if $\Delta(G)= \frac{n+1}{2}$, then $G$ is
isomorphic to $K_{\frac{n-1}{2},\frac{n+1}{2}}$. Suppose that
$\Delta(G)= \frac{n+1}{2}$, and let $u$ be a vertex of maximum
degree in $G$. Let $v_{1}\ldots,v_{k}$ ($k=\frac{n+1}{2}$) be the
neighbours of $u$, and let $u_{1},\ldots,u_{l}$ be the remaining
vertices of $G$. Note that

\begin{center}
$l=n-k-1= n-\frac{n+1}{2}-1=\frac{n-3}{2}$.
\end{center}

Let us show that for each $1\leq i \leq k$ and $1\leq j \leq l$,
$v_{i}u_{j}\in E(G)$. Suppose not, that is, assume that for some $i$
and $j$, $v_{i}u_{j}\notin E(G)$. Since by (a) $d_{G}(v_{i})\geq
\frac{n-1}{2}$, there is an edge $v_{i}v_{p}\in E(G)$, which is a
contradiction, since $u, v_{i}, v_{p}$ form a triangle of $G$. Thus,
any $v_{i}$ ($1\leq i \leq k$) is adjacent to any $u_{j}$ ($1\leq j
\leq l$). Note that by (c), there can be no edge among vertices
$v_{1},\ldots,v_{k}$. Also, note that there is no edge among
$u_{1},\ldots,u_{l}$, since $d_{G}(u_{j})\geq k=\frac{n+1}{2}$ and
$\Delta(G)= \frac{n+1}{2}$ by assumption. Now, it is not hard to see
that $G$ is isomorphic to $K_{\frac{n-1}{2},\frac{n+1}{2}}$.

Thus, it remains to consider the case $\Delta(G)= \frac{n-1}{2}$,
and to show that $G$ is isomorphic to $C_{5}$. Suppose that
$\Delta(G)= \frac{n-1}{2}$. (a) implies that $G$ is an $r$-regular
graph of degree $r=\frac{n-1}{2}$. Since $n$ is odd, we have that
$r$ is even. Let us show that $r=2$. Suppose that $r\geq 4$. Choose
any edge $uv\in E(G)$, and let $u_{1},\ldots,u_{r-1}$ and
$v_{1},\ldots,v_{r-1}$ be the other ($\neq v$ and $\neq u$)
neighbours of $u$ and $v$, respectively. Note that (c) implies that
$\{u_{1},\ldots,u_{r-1}\}\cap \{v_{1},\ldots,v_{r-1}\}=\emptyset$.
Let $w$ be the remaining vertex of $G$.

Suppose that $w$ is adjacent to $k$ vertices from
$\{v_{1},\ldots,v_{r-1}\}$, where $1\leq k \leq r-1$. We can assume
that these vertices are $v_{1},\ldots,v_{k}$. Then $w$ must be
adjacent to $r-k$ vertices from $\{u_{1},\ldots,u_{r-1}\}$. Again,
we can assume that these vertices are $u_{1},\ldots,u_{r-k}$.

Note that since $r\geq 4$, we have that either $k-1<r-2$ or
$r-1-k<r-2$. Suppose that $k-1<r-2$. Consider the vertex $v_{1}$.
Note that if $v_{1}$ is adjacent to one of $v_{2},\ldots,v_{r-1}$,
then we will have a triangle in $G$ contradicting (c), therefore, we
can assume that $v_{1}$ is adjacent to none of
$v_{2},\ldots,v_{r-1}$. Since $k-1<r-2$, there is an edge
$v_{1}u_{j}\in E(G)$, $1\leq j \leq r-k$. Now, note that $w, v_{1}$
and $u_{j}$ form a triangle, which contradicts (c).

Similarly, suppose that $r-1-k<r-2$. Consider the vertex $u_{1}$.
Note that if $u_{1}$ is adjacent to one of $u_{2},\ldots,u_{r-1}$,
then we will have a triangle in $G$ contradicting (c), therefore, we
can assume that $u_{1}$ is adjacent to none of
$u_{2},\ldots,u_{r-1}$. Since $r-1-k<r-2$, there is an edge
$u_{1}v_{j}\in E(G)$, $1\leq j \leq k$. Now, note that $w, u_{1}$
and $v_{j}$ form a triangle, which contradicts (c).

Thus, $r=2$. Since $n=2r+1$, we imply that $G$ is isomorphic to
$C_{5}$. The proof of the theorem is completed.
\end{proof}\

\end{document}